\theoremstyle{plain} 
\newtheorem{theorem}{Theorem}[section]
\newtheorem{lemma}[theorem]{Lemma}
\newtheorem{proposition}[theorem]{Proposition}
\newtheorem{conjecture}[theorem]{Conjecture}
\newtheorem{reduction}{Rule}
\title{Minimum Sum Vertex Cover: \\Kernelization and Parameterized Algorithms}
\author{
  Yixin Cao\thanks{School of Computer Science and Engineering, Central South University, Changsha, China.  \texttt{jingyiliu@csu.edu.cn, jxwang@mail.csu.edu.cn}.} \thanks{Department of Computing, Hong Kong Polytechnic University, Hong Kong, China.  \texttt{yixin.cao@polyu.edu.hk}.
  }
  \and Ling Gai\thanks{Business School, University of Shanghai for Science and Technology, Shanghai, China.  lgai@usst.edu.cn.}
  \and Jingyi Liu\footnotemark[1]
  \and Jianxin Wang\footnotemark[1]
}
\date{}
\begin{document}
\maketitle
\begin{abstract}
  Given an ordering of the vertices of a graph, the cost of covering an edge is the smaller number of its two ends. The minimum sum vertex cover problem asks for an ordering that minimizes the total cost of covering all edges. We consider parameterized complexity of this problem, using the largest cost~$k$ of covering a single edge as the parameter. Note that the first $k$ vertices form a (not necessarily minimal) vertex cover of the graph, and the ordering of vertices after $k$ is irrelevant. We present a $(k^2 + 2 k)$-vertex kernel and an $O(m + 2^kk! k^4)$-time algorithm for the minimum sum vertex cover problem, where $m$ is the size of the input graph. Since our parameter~$k$ is polynomially bounded by the vertex cover number of the input graph, our results also apply to that parameterization.
\end{abstract}

\section{Introduction}

A \emph{vertex cover} of a graph~$G$ is a set of vertices such that every edge has at least one end in the set.
In other words, this vertex set ``covers'' all the edges of the graph.
There are many variations of this fundamental concept.
In the temporal setting, e.g., we take the vertices in a vertex cover one by one.
For the $i$th vertex~$v$, we assign the number~$i$ to all uncovered edges incident to~$v$, which are then covered \emph{by the vertex~$v$ with cost~$i$}.
The objective is to minimize the sum of costs assigned to all edges.
Given a graph~$G$, the \emph{minimum sum vertex cover} problem asks for an ordering of the vertices of~$G$ to minimize the total cost of all edges of~$G$, where the cost of an edge is the smaller number of its two ends in the ordering.

This problem was formally introduced by Feige et al.~\cite{FLT2004}.
They considered a more general problem, called the \emph{minimum sum set cover}, of which our problem is a special case.  They proved that the minimum sum vertex cover problem is APX-hard and admits a 2-approximation, which was improved to 16/9 recently by Bansal et al.~\cite{BBFT2023}.
Stankovi\'c~\cite{S2022} showed that the minimum sum vertex cover problem cannot be approximated within 1.014, assuming the Unique Games Conjecture.

In comparison, there was not much study on the parameterized complexity of the minimum sum vertex cover problem, and one explanation is about the parameter.
Note that to make a decision problem, the input has an additional integer~$w$, and the question becomes whether the input graph admits a solution with total cost at most~$w$.
The standard parameterization is uninteresting because in a yes-instance, $w$ cannot be smaller than the size of the graph.
One option, taken by Aute and Panolan~\cite{AP2024}, is to use the vertex cover number $\tau$, i.e., the size of a minimum vertex cover of the input graph.
They presented an algorithm running in time $\tau! (\tau + 1)^{2^\tau} n^{O(1)}$, where $n$ is the order of the input graph.
It starts with finding a minimum vertex cover~$S$ and guessing the ordering of the vertices in~$S$.
The vertices in $V(G)\setminus S$ is an independent set, and they can be partitioned into at most $2^\tau$ sets with respect to their neighborhoods in~$S$.  The key observation is that vertices in the same set (which are twins) appear consecutively in an optimal ordering.
The astronomical factor in the running time is for guessing the relative order of these sets.
Aute and Panolan~\cite{AP2024} also considered parameterization by the size of a minimum vertex cover of the complement of the input graph.

We take a natural alternative.  Let~$\sigma$ be an optimal ordering of~$V(G)$.  Taking the ends of all edges that are assigned the smaller number by~$\sigma$, we end with a vertex cover~$S$ of~$G$.
This vertex cover may or may not be minimal; e.g., consider the 19-vertex graph obtained from six disjoint claws by identifying one leaf from each claw.
It is an easy exercise to show that the vertices in~$S$ are at the beginning of~$\sigma$.
Indeed, what matters in~$\sigma$ is the ordering of vertices in~$S$, and other vertices can be appended to that in an arbitrary order.
We are thus motivated to use the size~$k$ of~$S$ as the parameter.

The parameters~$k$ and~$\tau$ are related.
On the one hand,
$\tau$ is a trivial lower bound for $k$.  On the other hand, a straightforward counting argument implies $k \le 2 \tau^2$, and we manage to show a nontrivial bound.
We conjecture that every graph has a solution in which $k= \tau+ O(\log \tau)$.

\begin{lemma}
  \label{lem:mvc and msvc}
  In any optimal solution of the minimum sum vertex cover problem, the maximum cost of an edge is $O(\tau^{1.5})$.
\end{lemma}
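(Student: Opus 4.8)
The plan is to fix an optimal ordering $\sigma$ together with a fixed minimum vertex cover $C$ (so $|C|=\tau$), and write $I=V(G)\setminus C$ for the remaining, independent, vertices. Let $k$ be the maximum cost; it equals $|S|$, since the vertices of $S$ occupy positions $1,\dots,k$ and each of them covers at least one edge, so the edge covered at position $k$ has cost $k$. For a vertex at position $i$ call its \emph{coverage} $c_i$ the number of its neighbours placed after it, so the total cost is $\sum_i i\,c_i$ and $c_i\ge 1$ for all $i\le k$. Two facts drive the proof. First, every neighbour of a vertex of $I$ lies in $C$, so a vertex of $I$ has degree at most $\tau$ and hence coverage at most $\tau$. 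Second, a routine adjacent-transposition exchange shows the coverage is non-increasing along $\sigma$: exchanging the vertices at positions $i$ and $i+1$ alters the total cost by $c_i-c_{i+1}$ up to the contribution of the single edge they may share, so optimality forces $c_i\ge c_{i+1}$.

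Combining the two facts, every vertex of coverage exceeding $\tau$ lies in $C$ and, by monotonicity, these form a prefix; let $H$ be this prefix and $h:=|H|\le\tau$. Then positions $h+1,\dots,k$ all have coverage in $[1,\tau]$; write $\delta_1,\dots,\delta_s$ for these coverages, where $s=k-h$, and let $U:=\sum_{j=1}^{s}\delta_j$ be the number of edges not covered by $H$. I would then compare $\sigma$ with the ordering $\sigma''$ that keeps the prefix $H$ intact, then places the remaining $\tau-h$ vertices of $C$, then everything else. Since $C$ is a vertex cover, $\sigma''$ covers every edge by position $\tau$, so each of the $U$ edges not covered by $H$ costs at most $\tau$ in $\sigma''$; the edges covered within $H$ have identical cost in the two orderings. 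In $\sigma$, the $j$-th vertex after the prefix sits at position $h+j$ and newly covers $\delta_j$ of these $U$ edges, so optimality ($\mathrm{cost}(\sigma)\le\mathrm{cost}(\sigma'')$) yields the key inequality
\[
  hU+\sum_{j=1}^{s} j\,\delta_j \;=\; \sum_{j=1}^{s}(h+j)\,\delta_j \;\le\; \tau U,
  \qquad\text{equivalently}\qquad
  \sum_{j=1}^{s} j\,\delta_j \;\le\; \beta\sum_{j=1}^{s}\delta_j,
\]
where $\beta:=\tau-h\le\tau$.

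Finally I would extract the bound on $s$ from this one inequality, using only $1\le\delta_j\le\tau$. Rewriting it as $\sum_{j=1}^{s}(j-\beta)\delta_j\le 0$ and splitting the sum at $j=\beta$ gives
\[
  \sum_{j>\beta}(j-\beta)\,\delta_j \;\le\; \sum_{j\le\beta}(\beta-j)\,\delta_j .
\]
The left-hand side is at least $\sum_{j>\beta}(j-\beta)\ge (s-\beta)^2/2$ (using $\delta_j\ge 1$), while the right-hand side is at most $\tau\sum_{j\le\beta}(\beta-j)\le \tau\beta^2/2$ (using $\delta_j\le\tau$). Hence $(s-\beta)^2\le\tau\beta^2$, so $s\le\beta(1+\sqrt{\tau})\le\tau+\tau^{1.5}$ and therefore $k=h+s=O(\tau^{1.5})$.

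The obstacle to get past is that the crude estimates $U\le\tau s$ and $\sum_j j\,\delta_j\ge s^2/2$ only give $k=O(\tau^2)$, matching the elementary bound: a large coverage budget $U$ weakens the comparison with $\sigma''$. The part that needs the most care is recognising that the \emph{same} inequality penalises both regimes at once, since the configuration maximising the admissible budget (all $\delta_j=\tau$) is exactly the one that forces $s$ to be small; balancing the two contributions through the split at $j=\beta$ is what produces the $\sqrt{\tau}$ saving over the naive quadratic bound. A secondary point requiring attention is making the comparison with $\sigma''$ genuinely cost-preserving on the prefix $H$, which is why the monotonicity of coverage (so that $H$ is a prefix) is set up first.
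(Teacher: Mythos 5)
Your proof is correct, and while it shares its two pillars with the paper's argument, it executes them along a genuinely different route. Both proofs rest on (a)~coverage monotonicity along an optimal ordering (the paper's Lemma~\ref{lem:fix position}(i), which you re-derive by the adjacent-transposition exchange) and (b)~the fact that vertices outside a minimum cover have coverage at most $\tau$, and both compare the optimal ordering against one that front-loads a minimum vertex cover. The difference is in how the high-coverage vertices are handled and how the final bound is extracted. The paper first reduces, by inductively peeling cover vertices off the front (``we may assume $v_1\not\in S^*$''), to the situation where \emph{every} coverage is at most $\tau$; it then lower-bounds the cost of $\sigma$ by an extremal coverage profile ($\tau$'s as long as possible, then $1$'s), upper-bounds the cost of the cover-first ordering by $m(\tau+1)/2$, and solves the resulting quadratic inequality in $k$, which involves $m$ (and implicitly needs $m=O(\tau^2)$, a fact that does follow from that same inequality although the paper glosses over it). You avoid the induction entirely: the prefix $H$ of positions with coverage exceeding $\tau$ must consist of cover vertices, hence $h\le\tau$, and your hybrid ordering $\sigma''$ preserves $H$ and front-loads $C\setminus H$, so the comparison is cost-preserving on the edges covered by $H$ and yields the single inequality $\sum_j j\,\delta_j\le\beta U$; the splitting at $j=\beta$ then gives an explicit, $m$-free bound $k\le h+\beta+\beta\sqrt{\tau}\le\tau+\tau^{1.5}$. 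Your version is more self-contained and gives a cleaner closed-form bound; the paper's version additionally extracts the bound $m=O(\tau^2)$ along the way. Two small points to tidy up: the set $S$ in your second sentence is never defined (you mean the vertices in positions $1,\dots,k$, and the fact that each covers at least one edge follows from $c_k\ge 1$ and monotonicity), and the estimate $\sum_{j>\beta}(j-\beta)\ge(s-\beta)^2/2$ presupposes $s>\beta$ --- you should add the one-line remark that if $s\le\beta$ then $k=h+s\le h+\beta=\tau$ and there is nothing to prove.
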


Our first result is a polynomial kernel.
Given an instance $(G, w, k)$, a {\em kernelization algorithm} produces in time polynomial in $n + k$ an equivalent instance $(G', w', k')$, called a \emph{kernel},
such that $k' \leq k$ and the order of $G'$ is bounded by a computable function of $k'$.

\begin{theorem}
  \label{the:vertex kernel}
  The minimum sum vertex cover problem has a $(k^2+2 k)$-vertex kernel, and it can be computed in linear time.
\end{theorem}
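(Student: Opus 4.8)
The plan is to work with the vertex cover $S$ formed by the first $k$ vertices of an optimal ordering, so that $I := V(G)\setminus S$ is an independent set, and to bound $|I|$ by $k^2+k$. First I would record the cost decomposition that drives everything: for any ordering $\sigma$ whose first $k$ vertices are exactly $S$, an edge inside $S$ costs the smaller of the two positions, while an edge between $s\in S$ and $I$ is always covered by $s$, so its cost is $\mathrm{pos}_\sigma(s)$. Writing $d(s)$ for the number of neighbours of $s$ in $I$, this gives
\[
  \mathrm{cost}(\sigma)=\sum_{uv\subseteq S}\min\{\mathrm{pos}_\sigma(u),\mathrm{pos}_\sigma(v)\}+\sum_{s\in S}\mathrm{pos}_\sigma(s)\,d(s).
\]
Two consequences are immediate and I would establish them first: every vertex of $I$ has degree at most $k$ (otherwise its edges could not all be covered by the $k$-vertex set $S$, and a vertex of degree $>k$ is forced into $S$), and the instance depends on $I$ only through the edges inside $S$ and the degree vector $(d(s))_{s\in S}$. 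Isolated vertices are deleted at the outset.

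The engine of the reduction is the shift-invariance visible in the formula: since $\sum_{s\in S}\mathrm{pos}_\sigma(s)=\binom{k+1}{2}$ for every ordering, lowering each $d(s)$ by a common $\rho$ drops the cost of every ordering by exactly $\rho\binom{k+1}{2}$, an amount independent of $\sigma$. Hence if I can delete from $I$ a set that meets each $s\in S$ in the same number $\rho$ of neighbours --- an ``$S$-regular'' set --- the optimum falls by a fixed, computable amount and I obtain an equivalent instance after setting $w'=w-\rho\binom{k+1}{2}$. Concrete safe deletions of this kind include vertices adjacent to all of $S$, and, more usefully, one private (pendant) neighbour from every $s\in S$ simultaneously when each has one. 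I would package these as reduction rules and apply them exhaustively; their correctness is the routine direction, following directly from the displayed identity.

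The heart of the argument is the counting step: I want to show that an instance to which none of the rules applies has $|I|\le k^2+k$, which together with $|S|=k$ yields the claimed $k^2+2k$ vertices. The natural route is to cap, for each $s\in S$, the number of retained $I$-neighbours at $k+1$: keeping $k+1$ neighbours per cover vertex gives $\sum_{s}d(s)\le k(k+1)$, and since every $I$-vertex contributes at least one to this sum this bounds $|I|$ by $k^2+k$. To justify the cap I would argue that once the multiplicity of a neighbourhood type exceeds $k+1$ the optimal ordering no longer moves as that multiplicity grows, so the optimum becomes an affine function of the multiplicity whose slope is the (now fixed) positional contribution of the type; capping then changes the optimum by a fixed amount and is safe. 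Combining the regular-deletion rules with this per-type cap, re-realising the resulting degree vector by an efficient bipartite graph, and checking that every rule is triggered and executed in linear time completes the construction.

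The step I expect to fight hardest with is exactly this cap. The clean shift-invariance only controls the \emph{common} part of the degree vector, not its spread, so a single cover vertex carrying enormously many private neighbours cannot be trimmed by a uniform argument; one must prove that beyond a threshold of order $k$ such a vertex is genuinely pinned to the front of every optimal ordering, so that further neighbours contribute linearly and ordering-independently. Getting the threshold down to $k+1$ (rather than a crude polynomial), so that the final bound is exactly $k^2+2k$, while keeping the whole procedure linear-time, is where the real work lies; Lemma~\ref{lem:mvc and msvc} and the degree bound on $I$ are the likely levers for pinning down that threshold.
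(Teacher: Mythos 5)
You have identified the right target quantities (degree cap of order $k(k+1)$ per cover vertex, plus the cover itself, giving $k^2+2k$), but there are two genuine gaps. First, your reduction rules are defined relative to $S$, the set of the first $k$ vertices of an \emph{optimal ordering} --- a set the kernelization algorithm cannot compute, since finding it is essentially the problem being solved. All of your deletions (the ``$S$-regular'' sets, one pendant neighbour per $s\in S$, the per-vertex cap) presuppose knowledge of $S$; moreover, your cost decomposition, and hence the shift-invariance giving the offset $\rho\binom{k+1}{2}$, holds only for orderings whose first $k$ vertices are exactly $S$, whereas equivalence of the original and reduced instances must compare optima over \emph{all} orderings, including ones whose first-$k$ set differs. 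The paper's rules avoid this entirely by being purely degree-based: the set $V_{>k}$, the isolated vertices of $G-V_{>k}$, and gaps in the sorted degree sequence are all computable directly from the input, and safeness is argued via Proposition~\ref{lem:high-degree} and Lemma~\ref{lem:fix position} without reference to any particular optimal ordering.

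The second, deeper gap is the cap itself, which you concede is unproven; as sketched it is in fact unsafe. Capping trims \emph{different} amounts from different cover vertices, so it changes their relative degrees, and the amount by which the optimum drops is not a fixed constant: each deleted edge at $s$ saves $\mathrm{pos}_\sigma(s)$, which depends on where $s$ sits, and capping can move $s$. Concretely, if $s_1$ has $10^6$ private neighbours and $s_2$ has $k+3$, then $s_1$ precedes $s_2$ in every optimal ordering of $G$; after capping both at $k+1$ neighbours they become interchangeable in the reduced instance, and an optimal reduced ordering that places $s_2$ first lifts back to a badly suboptimal ordering of $G$, so no single value $w'$ can make the two instances equivalent. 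The paper's Rule~\ref{rul:max-degree} is engineered precisely around this obstruction: it deletes the \emph{same} number of edges from every vertex of a prefix group, and only when that group is separated from the remaining vertices by a degree gap exceeding $k$, so that Lemma~\ref{lem:fix position}(ii) pins the group into the first $t$ positions in any optimal solution of \emph{either} instance, making the offset $(t^2+t)(\delta_t-k)/2$ ordering-independent. Finally, your last step --- re-realizing the degree vector of the independent part by a small set of fresh vertices --- additionally requires the fact (the paper's Lemma~\ref{lem:I not in opt}) that a vertex all of whose neighbours have degree greater than $k$ is placed after all its neighbours in any optimal solution; without that, you cannot argue the reattachment preserves the cost of every feasible solution.
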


Similar to the classic Buss reduction for the vertex cover problem~\cite{buss-93-nondeterminism-within-p}, we try to dispose of vertices with degrees higher than~$k$. 
Recall that Buss' reduction removes all high-degree vertices and then isolated vertices, whose neighbors, if any, are all high-degree ones.
For our problem, however, we cannot remove these vertices because we do not know the positions of these high-degree vertices in an optimal solution.
Let~$V_{> k}$ denote the set of vertices of degrees greater than~$k$, and~$I$ the set of isolated vertices in~$G - V_{> k}$.  Since each vertex in~$V_{> k}$ has to be in one of the first~$k$ positions of any solution, $|V_{> k}| \le k$.
We can use Buss' argument to bound non-isolated vertices in~$G - V_{> k}$, and the main work is to deal with~$I$.

We observe that no vertex in~$I$ can be in the first~$k$ positions.
As a result, the edges between~$V_{> k}$ and $I$ are only needed for their ends in~$V_{> k}$.  More specifically, what matters is the number of edges between each vertex in~$V_{> k}$ and $I$.  We can then replace~$I$ with a small set of vertices and move all the edges between~$V_{> k}$ and $I$ to be incident to this set.  The size of this set can be set as the maximum number of neighbors in~$I$ a vertex in~$V_{> k}$ can have, which is upper bounded by the maximum degree of the graph.
We have thus a kernel if we can bound the maximum degree of the graph.
If the degrees of two vertices differ by more than~$k$, i.e., $d(u) > d(v) + k$, then $u$ must appear before~$v$ in any solution.
We sort the vertex degrees non-increasingly.  If the degrees of two consecutive vertices differ by more than~$k$, then we can partition $V(G)$ into~$V_{1}$ and~$V_{2}$ such that $d(u) > d(v) + k$ for every pair of~$u\in V_{1}$ and~$v\in V_{2}$.
Again, the purpose of the edges between~$V_{1}$ and~$V_{2}$ is for their ends in~$V_{1}$.
For each vertex in~$V_{1}$, we remove the same number of edges, chosen arbitrarily, between it and~$V_{2}$.
Together with $|V_{> k}| \le k$, we have a bound $k^2 + k$ on the maximum degree of the graph.

It is worth stressing that we only conduct each reduction once, and thus the algorithm can be carried out in linear time.

Our second result is a parameterized algorithm.  We may start with the kernel.  Note that once we know the set of the first~$k$ vertices in a solution, a straightforward adaption of the Held--Karp algorithm~\cite{held-62-dp} can produce a feasible ordering. 
A trivial approach is thus to enumerate all $k$-vertex subsets of $V(G)$, followed by the Held--Karp algorithm to find an ordering.
The running time is $(2k)^{2k} k^{O(1)}$, which already improves~\cite{AP2024}.  We show an even better result, where $m$ is the size of the input graph.
\begin{theorem}
  \label{thm:alg-msvc}
  There is an $O(m + 2^kk! k^4)$-time algorithm for minimum sum vertex cover.
\end{theorem}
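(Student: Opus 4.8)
The plan is to run everything on the kernel of Theorem~\ref{the:vertex kernel}: it has $N\le k^2+2k$ vertices and is produced in $O(m)$ time, which supplies the additive $O(m)$ term and reduces the task to a graph of order $O(k^2)$. The lever is a reformulation of the cost. For an ordering $\sigma$, let $S_i$ be the set of its first $i$ vertices. An edge whose earlier end sits at position $c$ is counted in $e(G-S_i)$ -- the number of edges with both ends outside $S_i$ -- for exactly $i=0,1,\dots,c-1$, so summing over all edges yields
\begin{equation*}
 \mathrm{cost}(\sigma)=\sum_{i\ge 0}e(G-S_i)=\sum_{i=0}^{k-1}e(G-S_i),
\end{equation*}
the truncation at $k$ holding because the first $k$ vertices cover every edge. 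What matters is that each term depends only on the \emph{set} $S_i$, not on how it is ordered internally.

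Granting this, ordering a committed first-set is a Held--Karp-style subset dynamic program. Fix the set $X$ that occupies the first $k$ positions. For $Y\subseteq X$ let $f(Y)$ be the minimum cost when $Y$ fills the first $|Y|$ positions; then $f(\emptyset)=0$ and
\begin{equation*}
 f(Y)=\min_{v\in Y}\bigl(f(Y\setminus\{v\})+e(G-(Y\setminus\{v\}))\bigr).
\end{equation*}
Maintaining the quantities $e(G-Y)$ incrementally -- each differs from a predecessor by the number of edges a single vertex newly covers -- fills the table in $2^{k}k^{O(1)}$ time and returns $f(X)$, the best cost whose first-set is exactly $X$. On the kernel the edge count is $O(k^4)$, which is where the polynomial factor $k^4$ originates.

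The decisive quantity is therefore the number of first-sets $X$ we must feed to this routine; enumerating all $\binom{N}{k}$ of them is precisely what produces the trivial bound $(2k)^{2k}$. I would instead prune with the degree-precedence rule already exploited in the kernelization: if $d(u)>d(v)+k$ then $u$ precedes $v$ in every optimal ordering. This has two effects. Writing $\delta$ for the $k$-th largest degree, any vertex of degree below $\delta-k$ is preceded by the $k$ densest vertices and so cannot occupy any of the first $k$ positions; hence $X$ lives inside the candidate set $C=\{v:d(v)\ge\delta-k\}$. Moreover $X$ must be a down-set of the precedence order, since a vertex forced before a member of $X$ is itself in the first $k$ positions. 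Enumerating only the size-$k$ down-sets of $C$ and running the Held--Karp routine on each gives a running time $O(m)+C_k\cdot 2^{k}k^{O(1)}$, where $C_k$ denotes the number of such candidate first-sets.

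The main obstacle is to show $C_k=O(k!)$, which then yields the claimed $O(m+2^{k}k!\,k^{4})$. This is where the tension between degree spread and graph density must be used. If the degrees are tightly clustered the precedence order is almost an antichain and the down-set restriction is weak; but then the edge budget bites, since a size-$k$ vertex cover covers at most $k\Delta$ edges, so the graph cannot contain too many vertices of comparable high degree. Conversely, a genuinely large candidate set forces a wide degree spread, which the precedence order converts into long forced chains and hence few down-sets. Balancing these two regimes -- charging the candidates against the $k$ budgeted positions and the at most $k$ forced high-degree vertices of $V_{>k}$ -- is what I expect to drive the count down to $O(k!)$, and it is the step that needs the most care.
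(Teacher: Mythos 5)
Your kernelization step, the cost reformulation $\mathrm{cost}(\sigma)=\sum_{i=0}^{k-1}e(G-S_i)$, and the Held--Karp subset DP for a \emph{fixed} first-set are all correct; indeed the paper itself observes that this route yields the trivial $(2k)^{2k}k^{O(1)}$ bound. The entire difficulty of the theorem is therefore concentrated in the step you defer to the end, and there the gap is genuine: the bound $C_k=O(k!)$ is not just unproven, it is false for the candidate structure you define. Take $G$ to be the disjoint union of $k$ stars $K_{1,k}$. This graph is a fixed point of the paper's kernelization (it has $k^2+k\le k^2+2k$ vertices, $V_{>k}=\emptyset$, no isolated vertices in $G-V_{>k}$, and Rule~\ref{rul:buss} holds with equality), and it is a yes-instance for suitable $w$ by placing the $k$ centers first. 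But every degree lies in $\{1,k\}$, so the hypothesis $d(u)-k\ge d(v)>0$ of Lemma~\ref{lem:fix position}(ii) never fires: the precedence order is empty, the threshold $\delta-k$ equals $0$, your candidate set $C$ is all of $V(G)$, and every $k$-subset is a down-set. Hence $C_k=\binom{k^2+k}{k}\ge k^{2k}/k!$, which exceeds $k!$ by a factor of roughly $e^{2k}$. The underlying problem is structural: degree precedence can never prune \emph{low}-degree vertices, and the kernel may contain $\Theta(k^2)$ of them; your density-versus-spread heuristic does not help here, since this graph is sparse and the many comparable-degree vertices are the leaves.

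What is missing is a different decomposition of the first-set, and this is exactly what the paper supplies. The first $k$ positions contain a \emph{minimal} vertex cover $S$ of $G$; by Damaschke's result there are at most $2^k$ minimal vertex covers of size at most $k$, enumerable in $O(m+k^22^k)$ time. The algorithm branches on $S$ and on an injective assignment of positions to $S$ ($k!/(k-|S|)!$ choices), and then fills each remaining gap $p$ by branching over only the $k-|S|$ vertices of $V(G)\setminus S$ with the largest savings score $s(\cdot)$ at position $p$; an exchange argument shows some top candidate always works, and this argument crucially uses the fact that $V(G)\setminus S$ is independent (so a candidate's score does not depend on which other candidates are chosen). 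This gives $2^k\cdot\bigl(k!/(k-|S|)!\bigr)\cdot(k-|S|)!\cdot k^{O(1)}=2^kk!\,k^{O(1)}$ branches. In the star example this collapses to a single minimal cover (the centers), exactly where your enumeration explodes. So the minimal-vertex-cover decomposition, not degree precedence, is the idea your proof needs; without it (or a genuinely new counting argument replacing it), the claimed running time is out of reach of your scheme.
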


We use a branching algorithm, and hence it suffices to consider yes-instances.
Suppose that~$\sigma$ is a solution, and $S^+$ is the set of first~$k$ vertices in~$\sigma$.
Since $S^+$ is a vertex cover of~$G$, there must be a subset~$S$ of~$S^+$ that is a minimal vertex cover of~$G$.
Damaschke \cite{D2006} showed how to enumerate in $O(m + k^2 2^k)$ time all minimal vertex covers of size at most $k$.
We can thus start with guessing the set~$S$ of vertices and their positions in~$\sigma$.
It remains to fill in the empty positions in the first $k$, by vertices from~$S^+ \setminus S$.
We show that it suffices to try a small number of candidates for each of them.

Before we conclude this section, let us mention that the problem is significantly easier on regular graphs.
For example, it admits an approximation ratio of 1.225~\cite{FLT2004, S2022}, compared to 16/9 on general graphs.
This is also the case for parameterized algorithms.
For a nontrivial regular graph, where the degree is three or more, $\tau \ge n/2$.  Thus, we can find the set of first~$k$ vertices in $O(2^{2k} k^{2})$ time, and then use the Held--Karp algorithm to find the ordering in the same time.
Therefore, the difficulty of the minimum sum vertex cover problem seems to be fundamentally different from the vertex cover problem.

\section{Preliminaries}

All the graphs discussed in this paper are finite and simple. The vertex set and edge set of a graph $G$ are denoted by, respectively, $V(G)$ and~$E(G)$, and $n = |V(G)|$ and~$m=|E(G)|$.  The set of \emph{neighbors} of a vertex~$v$ in~$G$ is denoted as~$N_{G}(v)$, and $d_{G}(v) = |N_{G}(v)|$.  We drop the subscript when the graph is clear from context.
We use $G -S$ to denote the subgraph of~$G$ induced by~$V (G) \setminus S$, which is further shortened to~$G - v$ when $S$ consists of a single vertex~$v$.

A \emph{solution} of a graph~$G$ is a bijection~$\sigma: V(G) \rightarrow \{1, 2, \ldots, n\}$.
Under the solution~$\sigma$, the \emph{cost} of an edge $u v$ is $\min(\sigma(u), \sigma(v))$.  The \emph{maximum cost}  and the \emph{total cost} of the solution are, respectively,
\[
  \max_{uv \in E(G)} \min(\sigma(u), \sigma(v)),
  \quad
  \sum_{uv \in E(G)} \min(\sigma(u), \sigma(v)).
\]
We are now ready to formally define the problem.  Note that we need $w$ in the input as otherwise the problem degenerates to the vertex cover problem.

\begin{itemize}
\item[Input:] a graph~$G$, two nonnegative integers~$k$ and~$w$.
\item[Output:] whether there exists a solution~$\sigma$ of~$G$ such that the maximum cost is at most~$k$ and the total cost is at most~$w$.
\end{itemize}

We say that a solution~$\sigma$ is \emph{feasible} if its maximum cost is at most~$k$ and the total cost is at most~$w$, and \emph{optimal} if its maximum cost is at most~$k$ and the total cost is minimized.  Note that $G$ might admit a solution with a smaller total cost and a maximum cost strictly greater than~$k$.

If $\sigma$ is a feasible solution, then the first $k$ vertices in~$\sigma$ form a vertex cover of~$G$.  

\begin{proposition}\label{lem:high-degree}
  Let~$(G, w, k)$ be a yes-instance and $v$ a vertex of~$G$.
  If $d(v) > k$, then $\sigma(v) \le k$ for any feasible solution~$\sigma$ of $(G, w, k)$. 
\end{proposition}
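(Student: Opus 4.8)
The plan is to argue by contradiction using a simple counting (pigeonhole) argument driven by the maximum-cost constraint. Suppose toward a contradiction that $\sigma$ is a feasible solution of $(G, w, k)$ with $\sigma(v) > k$. First I would unpack what feasibility contributes: since the maximum cost of $\sigma$ is at most $k$, every edge $e = xy$ of $G$ satisfies $\min(\sigma(x), \sigma(y)) \le k$. In words, every edge has at least one endpoint among the first $k$ positions of~$\sigma$.

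Next I would apply this to the edges incident to~$v$. Let $u$ be any neighbor of~$v$. The edge $uv$ must satisfy $\min(\sigma(u), \sigma(v)) \le k$; since we assumed $\sigma(v) > k$, the minimum cannot be attained at~$v$, and therefore $\sigma(u) \le k$. As $u$ was an arbitrary neighbor, every one of the $d(v)$ neighbors of~$v$ is mapped by~$\sigma$ into the set of positions $\{1, 2, \ldots, k\}$.

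Finally I would close the argument by counting. The neighbors of~$v$ are $d(v)$ distinct vertices, and $\sigma$ is a bijection, so their images are $d(v)$ distinct integers, all lying in $\{1, \ldots, k\}$. This forces $d(v) \le k$, contradicting the hypothesis $d(v) > k$. Hence no feasible solution can place~$v$ after position~$k$, that is, $\sigma(v) \le k$.

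I do not anticipate a genuine obstacle here: the only points requiring care are invoking the correct half of the feasibility definition (the bound on the \emph{maximum} cost, not the bound~$w$ on the total cost) and using that the neighbors of~$v$ are distinct vertices, so that their positions under the bijection~$\sigma$ are genuinely distinct. The statement merely formalizes the observation already sketched in the introduction that every vertex of degree greater than~$k$ must occupy one of the first~$k$ positions of any feasible solution.
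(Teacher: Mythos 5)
Your proof is correct and is essentially the paper's own argument: both deduce from feasibility (the maximum-cost bound) that every neighbor of~$v$ would have to occupy a position in $\{1,\ldots,k\}$, which is impossible by pigeonhole since $d(v) > k$. You have simply spelled out the counting step that the paper leaves implicit.
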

\begin{proof}
  If $\sigma(v) > k$, then $\sigma(x) \le k$ for all $x\in N(v)$.   This is impossible because $d(v) > k$.
\end{proof}

We use $r_{\sigma}(i)$ to denote the number of edges with cost~$i$, i.e.,
\[
  r_{\sigma}(i) = | \{x\in N(\sigma^{-1}(i))\mid \sigma(x) > i\} |.
\]
Note that the total cost of the solution~$\sigma$ can be alternatively written as $\sum_{i=1}^{n} i\cdot r_{\sigma}(i)$.
In an optimal solution, the numbers $r_{\sigma}(\cdot)$ are non-increasing.

\begin{lemma}
  \label{lem:fix position}
  Let~$(G, w, k)$ be a yes-instance and
  $\sigma$ an optimal solution of $(G, w, k)$.
  For any pair of vertices~$u$ and $v$ of~$G$,
  \begin{enumerate}[i)]
  \item if $\sigma(u) < \sigma(v)$, then $r_{\sigma}(\sigma(u))\ge r_{\sigma}(\sigma(v))$; and
  \item if $d(u) - k \geq d(v) > 0$, then $\sigma(u) < \sigma(v)$.
  \end{enumerate}
\end{lemma}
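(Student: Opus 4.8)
The plan is to prove both parts by exchange arguments that exploit the fact that $\sigma$ minimizes the total cost among all feasible solutions, and to deduce (ii) from (i) together with Proposition~\ref{lem:high-degree}.

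For (i), I would first reduce to consecutive positions: it suffices to show $r_\sigma(i)\ge r_\sigma(i+1)$ for every $i$, since chaining this inequality from $\sigma(u)$ up to $\sigma(v)$ yields the claim. For $i\ge k$ the inequality is immediate, because in a feasible solution every edge has cost at most $k$, so the vertex at any position beyond $k$ has all its neighbors before it and $r_\sigma(i+1)=0\le r_\sigma(i)$. For $i+1\le k$, I would form $\sigma'$ by transposing $u=\sigma^{-1}(i)$ and $v=\sigma^{-1}(i+1)$ and track exactly which edge costs change. Only edges incident to $u$ or $v$ are affected: the edge $uv$ (if present) keeps cost $i$; each neighbor of $u$ at a position beyond $i+1$ raises its edge cost by one, and each neighbor of $v$ beyond $i+1$ lowers its edge cost by one. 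This gives $\mathrm{cost}(\sigma')-\mathrm{cost}(\sigma)=r_\sigma(i)-r_\sigma(i+1)-[uv\in E]$. Since every raised edge now costs $i+1\le k$ and no other edge exceeds its old cost, $\sigma'$ is still feasible; optimality of $\sigma$ then forces this difference to be nonnegative, so $r_\sigma(i)\ge r_\sigma(i+1)+[uv\in E]\ge r_\sigma(i+1)$.

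For (ii), I would first observe that $d(u)\ge d(v)+k>k$, so Proposition~\ref{lem:high-degree} places $u$ among the first $k$ positions, i.e.\ $\sigma(u)\le k$. Assume for contradiction that $\sigma(v)<\sigma(u)$. Noting that $r_\sigma(\sigma(w))$ is precisely the number of neighbors of $w$ lying after it, part (i) gives $r_\sigma(\sigma(v))\ge r_\sigma(\sigma(u))$. The vertex $u$ has at most $\sigma(u)-1\le k-1$ neighbors before it, hence at least $d(u)-(k-1)$ neighbors after it; therefore $r_\sigma(\sigma(v))\ge r_\sigma(\sigma(u))\ge d(u)-k+1\ge d(v)+1$. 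But $r_\sigma(\sigma(v))$ counts neighbors of $v$, so it cannot exceed $d(v)$, a contradiction. Thus $\sigma(u)<\sigma(v)$.

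The main obstacle is the bookkeeping in the transposition for (i): one must verify precisely that the only edges whose cost changes are those from $u$ (respectively $v$) to vertices strictly beyond position $i+1$, handle the edge $uv$ correctly so that it contributes the $-[uv\in E]$ term, and confirm that feasibility (maximum cost at most $k$) is preserved so that optimality can be invoked. Once the cost difference is pinned down, both conclusions are short; in particular (ii) reduces to a one-line degree count once (i) and Proposition~\ref{lem:high-degree} are in hand.
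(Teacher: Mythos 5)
Your proof is correct and follows essentially the same route as the paper's: part (i) by reducing to an adjacent transposition and invoking optimality, and part (ii) by combining Proposition~\ref{lem:high-degree} with a degree count and part (i). If anything, your version is slightly more careful than the paper's, since you explicitly verify feasibility of the transposed solution (via the case $i\ge k$ and the bound $i+1\le k$ on raised costs) and track the $[uv\in E]$ correction term, points the paper's proof glosses over.
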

\begin{proof}
  (i) It suffices to show that if $\sigma(u) = \sigma(v) - 1$, then $r_{\sigma}(\sigma(u))\ge r_{\sigma}(\sigma(v))$.
  Let~$\sigma'$ be obtained from $\sigma$ by switching $u$ and $v$.
  For each edge $u x$ with $\sigma(x) > \sigma(v)$, the cost  is $\sigma(u)$ in~$\sigma$ and~$\sigma(v)$ in~$\sigma'$. 
  For each edge $v x$ with $\sigma(x) > \sigma(v)$, the cost  is $\sigma(v)$ in~$\sigma$ and~$\sigma(u)$ in~$\sigma'$.
  All the other edges have the same cost in~$\sigma$ and~$\sigma'$.
  By the optimality of $\sigma$, we have 
  $r_{\sigma}(\sigma(u)) - r_{\sigma}(\sigma(v)) \le 0$.

  (ii) Noting that $\sigma(u) \le k$ by Proposition~\ref{lem:high-degree}, we have
  \[
    r_{\sigma}(\sigma(u)) \ge d(u) - (\sigma(u) - 1) > d(u) - k \ge d(v) \ge r_{\sigma}(\sigma(v)).
  \]
  Thus, $\sigma(u) < \sigma(v)$.
\end{proof}

\section{Kernelization}

We number the vertices of the input graph as $v_1, v_2, \ldots, v_{n}$ such that $d(v_{i}) \ge d(v_{i+1})$ for all $i = 1, \ldots, n-1$.
We start with dealing with vertices with high degrees.
The safeness of our first rule follows immediately from Proposition~\ref{lem:high-degree}.

\begin{reduction}
  \label{rul:trivial}
  If $d(v_{k +1}) > k$, then return a trivial no-instance.
\end{reduction}

For $i = 1, \ldots, n - 1$, let $\delta_{i} = d(v_i) - d(v_{i+1})$.
The safeness of our second rule is ensured by Lemma~\ref{lem:fix position}.

\begin{reduction}
  \label{rul:max-degree}
  If there exists $t$ such that $\delta_t > k$, 
  then for each $i = 1, \ldots, t$, remove $\delta_t - k$ edges between $v_{i}$ and $\{v_{t+1}, \ldots, v_{n}\}$, and decrease $w$ by ${(t^2+t)}(\delta_t - k)/2$.
\end{reduction}
\begin{proof}[Safeness of Rule~\ref{rul:max-degree}.]
Denote by~$G'$ the resulting graph obtained after applying Rule~\ref{rul:max-degree}, and let $w' = w - {(t^2+t)}(\delta_t - k)/2$.   We show that $(G, w, k)$ is a yes-instance if and only if $(G', w', k)$ is a yes-instance.

Suppose that $(G, w, k)$ is a yes-instance, and let~$\sigma$ be an optimal solution.
Note that $G'$ is a subgraph of~$G$, and the costs of remaining edges remain the same.
By the ordering of the vertices, for every pair of~$i$ and~$j$ with~$i \le t< j$, we have
\[
  d_{G}(v_i) \ge d_{G}(v_t) = d_{G}(v_{t+1}) + \delta_t \geq d_{G}(v_{t+1}) + k \ge d_{G}(v_j) + k.
\]
Since $d_{G}(v_{t+1}) > 0$, we have $\sigma(v_i) < \sigma(v_j)$ by Lemma~\ref{lem:fix position}(ii).
Thus, $\sigma(v_i) \leq t$ if and only if $i \le t$.
The cost of the solution~$\sigma$ of~$G'$ is 
\[
  w - 1 (\delta_t - k) - 2 (\delta_t - k) - \cdots - t (\delta_t - k) =
w - \frac{(t^2+t)(\delta_t - k)}{2} = w'.
\]
Therefore, $\sigma$ is a feasible solution of~$(G', w', k)$.  

Now suppose that $(G', w', k)$ is a yes-instance, and let $\sigma'$ be an optimal solution in which all the isolated vertices of~$G'$ are at the end.
By construction, for all $i = 1, \ldots, t$,
\[
  d_{G'}(v_{i}) = d_{G}(v_{i}) - (\delta_t - k) \ge
  d_{G}(v_{t}) - (\delta_t - k) =
  d_{G}(v_{t+1}) + k \ge
 \max_{j = t+1}^{n} d_{G'}(v_{j}) + k.
\]
Thus, for every pair of~$i$ and~$j$ with~$i \le t< j$, we have $\sigma'(v_i) < \sigma'(v_j)$: by Lemma~\ref{lem:fix position}(ii) when $d_{G'}(v_{j}) > 0$ or by assumption otherwise.
Thus, $\sigma'(v_i) \leq t$ if and only if $i \le t$.
Since all the edges in~$E(G)\setminus E(G')$ are incident to~$\{v_{1}, \ldots, v_{t}\}$, the solution $\sigma'$ is also feasible for~$G$.
The cost of~$\sigma'$ on~$G$ is $w' + {(t^2+t)}(\delta_t - k)/2 = w$.
\end{proof}

Rules~\ref{rul:trivial} and~\ref{rul:max-degree} bound the maximum degree of the graph.
Let~$V_{>k}$ denote the set of vertices of degrees greater than $k$.

\begin{lemma}\label{lem:maximum-degree}
  If neither of Rules~\ref{rul:trivial} and~\ref{rul:max-degree} can be applied, then $d(v_1) \leq k(|V_{>k}|+1)$.
\end{lemma}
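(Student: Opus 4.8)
The plan is to read off the two structural facts guaranteed when neither rule is applicable and then combine them through a telescoping sum over consecutive degree gaps. The failure of Rule~\ref{rul:max-degree} controls how quickly the degrees can drop along the sorted list, while the failure of Rule~\ref{rul:trivial} pins down where the degrees first fall to at most~$k$.

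First I would set $p = |V_{>k}|$. Since Rule~\ref{rul:trivial} cannot be applied, $d(v_{k+1}) \le k$; combined with the non-increasing ordering $d(v_1) \ge d(v_2) \ge \cdots \ge d(v_n)$, this shows that the vertices of degree greater than~$k$ are exactly the prefix $v_1, \ldots, v_p$, so $p \le k$ and in particular $d(v_{p+1}) \le k$. Next I would invoke the failure of Rule~\ref{rul:max-degree}: since no index~$t$ satisfies $\delta_t > k$, every consecutive gap obeys $\delta_i \le k$. Writing $d(v_1)$ as a telescoping sum over these gaps,
\[
  d(v_1) = d(v_{p+1}) + \sum_{i=1}^{p} \delta_i,
\]
and bounding $d(v_{p+1}) \le k$ together with each $\delta_i \le k$, I obtain
\[
  d(v_1) \le k + p\,k = k\,(|V_{>k}| + 1).
\]

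I do not expect a genuine obstacle here: the argument is short, and its content lies entirely in correctly extracting the two consequences of the rules being inapplicable. The one point deserving a little care is the claim that $V_{>k}$ coincides with the prefix $v_1, \ldots, v_p$, which relies on the degrees being sorted non-increasingly together with $d(v_{k+1}) \le k$ to guarantee that $v_{p+1}$ exists; once this is established, the telescoping identity also covers the degenerate case $p = 0$, where it simply reduces to $d(v_1) \le k$.
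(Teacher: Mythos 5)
Your proof is correct and is essentially the paper's own argument: the paper likewise writes $d(v_1)$ as the telescoping sum $\delta_1 + \cdots + \delta_{|V_{>k}|} + d(v_{|V_{>k}|+1})$, bounding each gap by $k$ via the inapplicability of Rule~\ref{rul:max-degree} and the last term by $k$ via the inapplicability of Rule~\ref{rul:trivial}. The only difference is presentational: you spell out why $V_{>k}$ is a prefix and why $v_{|V_{>k}|+1}$ exists, whereas the paper dismisses the degenerate situation with the remark that the claim is trivial when $n \le k$.
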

\begin{proof}
  It is trivial when $n \le k$, and hence we assume otherwise.
  By assumption,
  $d(v_1) = \sigma_{1} + \cdots + \sigma_{|V_{>k}|} + d(v_{|V_{>k}|+1}) \leq k(|V_{>k}|+1)$.
\end{proof}

We now turn to vertices of degrees at most $k$.
Let~$I$ denote the set of isolated vertices of the subgraph~$G - V_{>k}$.
Note that $N(v)\subseteq V_{>k}$ for all vertices~$v\in I$.  We can use Buss' observation to bound the number of vertices that are not isolated in~$G - V_{>k}$.

\begin{reduction}[\cite{buss-93-nondeterminism-within-p}]
  \label{rul:buss}
  If $|V(G)\setminus (V_{> k}\cup I)| > (k-|V_{>k}|) (k + 1)$, then return a trivial no-instance.
\end{reduction}
\begin{proof}[Safeness of Rule~\ref{rul:buss}]
Let $k' = k - |V_{>k}|$.
If $|V(G)\setminus (V_{> k}\cup I)| > k k' + k'$, then we must assign at least $k' + 1$ vertices from them to a position no later than $k$, which is not possible.
\end{proof}

The next rule is our main tool to get rid of most vertices.
Note that $\max_{v \in V_{> k}}|N(v)\cap I| = |I|$ if and only if $I\subseteq N(v_{i})$ for some $i = 1, \ldots, |V_{> k}|$.
\begin{reduction}
  \label{rul:change vertices in I}
   Let~$p = \max_{v \in V_{> k}}|N(v)\cap I|$.  If $p < |I|$, 
  \begin{enumerate}[i)]
  \item add $p$ new vertices $x_{1}, \ldots, x_{p}$ to $G$;
  \item for each vertex $v \in V_{> k}$, add edges
    $\{v x_{i}\mid 1\le i \le |N(v)\cap I|\}$; and
  \item delete all the vertices in~$I$ from~$G$.
  \end{enumerate}
\end{reduction}

For the safeness of Rule~\ref{rul:change vertices in I}, we need the following lemma.

\begin{lemma}
  \label{lem:I not in opt}
  Let~$(G, w, k)$ be a yes-instance and $v$ a vertex of~$G$.
  If $d(x) > k$ for all the neighbors~$x$ of~$v$, then $\sigma(v) > \sigma(x)$ in any optimal solution~$\sigma$ of $(G, w, k)$.
\end{lemma}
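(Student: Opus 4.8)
The plan is to argue by contradiction: suppose $\sigma$ is optimal but $v$ precedes some neighbor, and exhibit a single transposition that strictly lowers the total cost. Write $i=\sigma(v)$ and let $s\ge 1$ be the number of neighbors of $v$ placed after position~$i$. Because every neighbor of $v$ has degree greater than~$k$, Proposition~\ref{lem:high-degree} forces all of them into the first~$k$ positions; in particular the $s$ later neighbors occupy distinct positions in $\{i+1,\dots,k\}$. Let $x$ be the \emph{nearest} such neighbor, at position $j=\sigma(x)$. Since the remaining $s-1$ later neighbors all sit strictly after~$x$ within $\{j+1,\dots,k\}$, I get the packing bound $k-j\ge s-1$, i.e.\ $j\le k-s+1$.

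Next I would lower-bound $r_\sigma(j)$. As $x$ has at most $j-1$ neighbors before position~$j$ and $d(x)>k$, we have $r_\sigma(j)\ge d(x)-(j-1)\ge k-j+2$, and the packing bound $j\le k-s+1$ turns this into $r_\sigma(j)\ge s+1$. This is the quantitative heart of the argument: the nearest later neighbor is heavy, so from position~$j$ it still covers strictly more edges than the number of edges $v$ covers toward its farther later neighbors.

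Then I would form $\sigma'$ by transposing $v$ and $x$ (exchanging only their two positions). Only edges incident to $v$ or $x$ change cost. Because $x$ is the nearest later neighbor, $v$ has no neighbor in $(i,j)$, so moving $v$ to~$j$ only raises the cost of the $s-1$ edges joining $v$ to its farther neighbors, by $j-i$ each, while the edge $vx$ is unchanged (its cost stays~$i$). Moving $x$ to~$i$ lowers the cost of each of its $r_\sigma(j)$ edges to later vertices by $j-i$, and can only help its edges into $(i,j)$. Collecting these, $\mathrm{cost}(\sigma')-\mathrm{cost}(\sigma)\le (s-1)(j-i)-(j-i)\,r_\sigma(j)=(j-i)\bigl(s-1-r_\sigma(j)\bigr)\le -2(j-i)<0$, contradicting optimality. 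Hence $v$ has no neighbor after it, which is exactly $\sigma(v)>\sigma(x)$ for every neighbor~$x$.

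The main obstacle, and the step I would spend the most care on, is choosing the right modification. Moving $v$ all the way to the end—the most natural first attempt—forces one to compare the increase $\sum(\sigma(y)-1-i)$ against $\sum_{p>i}r_\sigma(p)$, where a term-by-term bound via the monotonicity of $r_\sigma$ is too weak and leaves a delicate summation over the gaps between consecutive neighbor positions. Transposing with the \emph{nearest} later neighbor sidesteps this entirely: it keeps the expensive edge $vx$ cheap, confines all increases to the $s-1$ farther neighbors, and pits them against $r_\sigma(j)\ge s+1$. Verifying that no edge is double counted in the transposition, and that the bound $j\le k-s+1$ is exactly what makes $r_\sigma(j)>s$, are the two points I would check most carefully.
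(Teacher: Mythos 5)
Your proof is correct, but it takes a genuinely different route from the paper's. The paper performs no exchange in this lemma: it takes the same nearest later neighbor (call it $u$, your $x$), invokes the monotonicity of $r_\sigma$ (Lemma~\ref{lem:fix position}(i), itself a consequence of optimality) to get $r_\sigma(\sigma(u)) \le r_\sigma(\sigma(v))$, and then counts positions---$u$ is preceded by at least $d(u)-r_\sigma(\sigma(v))$ of its own neighbors and followed by the remaining $r_\sigma(\sigma(v))-1$ later neighbors of $v$---to conclude that the last neighbor of $v$ sits at position at least $d(u)>k$, contradicting Proposition~\ref{lem:high-degree}. So the paper uses optimality to contradict feasibility (the maximum-cost bound), whereas you use feasibility to contradict optimality: your packing bound $j \le k-s+1$ is exactly the negation of the paper's conclusion $j+s-1 \ge d(u) > k$, and your count $r_\sigma(j)\ge d(x)-(j-1)$ is the same degree count the paper uses, rearranged. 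The two arguments are dual in that sense and share all their ingredients, but yours buys independence from Lemma~\ref{lem:fix position} (it is self-contained modulo Proposition~\ref{lem:high-degree}) and an explicit quantitative certificate of suboptimality (the cost drops by at least $2(j-i)$), while the paper's is shorter given that the monotonicity lemma is already in place. One small point you should make explicit: since ``optimal'' in this paper means minimum total cost among solutions of maximum cost at most $k$, you need to note that your transposed solution $\sigma'$ still has maximum cost at most $k$; this is immediate, because every edge whose cost changes is incident to $v$ or $x$, which end up at positions $i$ and $j \le k$, and all other edge costs are unchanged.
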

\begin{proof}
  Suppose for contradiction that $X = \{x\in N(v)\mid \sigma(v) < \sigma(x)\}$ is not empty.
  We take the vertex~$u$ from~$X$ such that $\sigma(u)$ is minimized.
  By the selection of~$u$, precisely $r_{\sigma}(\sigma(v)) - 1$ neighbors of~$v$ appear after $u$ in~$\sigma$.
  On the other hand, by Lemma~\ref{lem:fix position}(i), $r_{\sigma}(\sigma(u))\le r_{\sigma}(\sigma(v))$.
  Thus, at least $d(u) - r_{\sigma}(\sigma(v))$ neighbors of~$u$ appear earlier than~$u$ in~$\sigma$.
  Thus, the position of the last neighbor of~$v$ in~$\sigma$ is at least $d(u) - r_{\sigma}(\sigma(v)) + 1 + r_{\sigma}(\sigma(v)) - 1 = d(u) > k$, a contradiction to Proposition~\ref{lem:high-degree}.
\end{proof}
\begin{proof}[Safeness of Rule~\ref{rul:change vertices in I}]
Denote by $G'$ the resulting graph obtained after applying Rule~\ref{rul:change vertices in I}.
Each vertex in~$V_{>k}$ has the same degree in~$G$ and~$G'$, while the degree of a vertex inserted by Rule~\ref{rul:change vertices in I} is at most $|V_{> k}| \le k$.
Thus, in any feasible solution~$\sigma$ of~$(G', w, k)$, we have (1)~$\sigma(v)\le k$ for all $v\in V_{>k}$ by Proposition~\ref{lem:high-degree}, and (2)~$\sigma'(x_{i}) > \sigma'(v)$ for all $i = 1, \ldots, p$ and $v\in N_{G'}(x_{i})$ by Lemma~\ref{lem:I not in opt}.
We can build a bijection between $E(G)\setminus E(G')$ and $E(G')\setminus E(G)$.
Each feasible solution~$\sigma$ of~$(G, w, k)$ has the same cost in~$G'$, and the same holds for each feasible solution of~$(G', w, k)$.  Thus, $(G, w, k)$ and $(G', w, k)$ are equivalent.
\end{proof}

We now summarize our kernelization algorithm in Figure~\ref{alg:kernel}.
Note that it does not change the value of $k$, and after a rule is not applicable, we do not need to check it again.   Thus, it can be carried out in linear time.

\begin{figure}[h!]
  \centering 
  \begin{tikzpicture}
    \path (0,0) node[text width=.7\textwidth, inner xsep=20pt, inner ysep=10pt] (a) {
      \begin{minipage}[t!]{\textwidth}
        \begin{tabbing}
          AAA\=Aaa\=aaa\=Aaa\=MMMMMAAAAAAAAAAAA\=A \kill
          1.\> {\bf if} $k < 0$ {\bf then return} a no-instance;
          \\
          2.  \> $k_0 \leftarrow |\{v \mid d(v)>k\}|$;
          \\
          3.\> {\bf if} $k_0 > k$ {\bf then return} a no-instance (Rule~\ref{rul:trivial});
          \\
          4.\> number the vertices such that $d(v_{1}) \ge d(v_{2}) \ge \cdots \ge d(v_{n})$;
          \\
          5.\> {\bf if} $d(v_1) > k(k_0 + 1)$ {\bf then} apply Rule~\ref{rul:max-degree};
          \\
          6.\> apply Rule~\ref{rul:buss} and Rule~\ref{rul:change vertices in I};
          \\
          7.\> {\bf return} $(G, w, k)$.
        \end{tabbing}
      \end{minipage}
    };
    \draw[draw=gray!60] (a.north west) -- (a.north east) (a.south west) -- (a.south east);
  \end{tikzpicture}
  \caption{The kernelization algorithm for minimum sum vertex cover.}
  \label{alg:kernel}
\end{figure}

\begin{proof}[Proof of Theorem~\ref{the:vertex kernel}]
We use the algorithm described in Figure~\ref{alg:kernel}.
The correctness of step~1 is trivial.
We have seen the safeness of the reduction rules, applied in steps~2--5.
Note that Rule~\ref{rul:max-degree} must be applicable in step~5 by Lemma~\ref{lem:maximum-degree}.
After step~6, the number of vertices is at most
\[
  k_{0} + (k-k_0) (k + 1) + k (k_0+1) = k^2 + 2 k.
\]
Since we process each edge once, the algorithm can be carried out in $O(n + m)$ time.
\end{proof}

\section{A parameterized algorithm}

We present a simple branching algorithm for the minimum sum vertex cover problem.
Suppose that $(G, w, k)$ is a yes-instance,
with a solution~$\sigma$.
By definition, the set of first~$k$ vertices in~$\sigma$ is a vertex cover of~$G$, and there must be a subset~$S$ of it is a minimal vertex cover of~$G$.
We use branching to find vertices in this minimal vertex cover and their positions in~$\sigma$, followed by other vertices.
Damaschke \cite{D2006} showed how to enumerate in $O(m + k^22^k)$ time all minimal vertex covers of size at most $k$.

\begin{lemma}[\cite{D2006}]
  A graph has at most $2^k$ minimal vertex covers of size at most $k$, and we can enumerate them in $O(m + k^2 2^k)$ time.
\end{lemma}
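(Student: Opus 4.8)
The plan is to build a binary search tree whose leaves enumerate the minimal vertex covers, and to argue that both the number of leaves and the work per leaf are small.

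For the counting bound, I would branch on a single vertex. Fix any vertex~$u$ that is incident to at least one edge. Every vertex cover either contains~$u$ or, if it does not, must contain every neighbor of~$u$; in particular a minimal vertex cover falls into exactly one of these two cases. This yields a two-way branch: in the first we place~$u$ into the cover and recurse on~$G - u$; in the second we place all of~$N(u)$ into the cover and recurse on the graph obtained by deleting~$u$ together with all of~$N(u)$. Using the remaining budget as the measure, the first branch decreases it by~$1$ and the second by $|N(u)| \ge 1$, so along any root-to-leaf path the budget reaches zero after at most~$k$ branchings, and the tree has at most~$2^k$ leaves. To bound the number of minimal vertex covers, I would map each minimal vertex cover~$S$ of size at most~$k$ to the unique leaf whose path is consistent with~$S$: at each branching vertex~$u$ exactly one case agrees with~$S$, and when $u \notin S$ the covering property forces $N(u) \subseteq S$, matching the forced branch. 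Two distinct minimal vertex covers diverge at the first branching vertex on which they disagree, so this map is injective, giving the bound~$2^k$.

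For the running time, I would first spend $O(m)$ time reducing to a graph with only $O(k^2)$ edges. Any vertex~$v$ with $d(v) > k$ lies in every vertex cover of size at most~$k$, hence in every minimal one, so I force all such vertices into the cover and delete them, rejecting the instance if there are more than~$k$ of them. What remains has maximum degree at most~$k$, so a vertex cover of size at most~$k$ touches at most~$k^2$ edges; if more edges survive, we reject. Running the branching above on this reduced graph produces a tree with at most~$2^k$ nodes, where at each node the chosen vertex, its neighborhood, and the budget update are handled in $O(k)$ time, while verifying at each leaf that the collected set is a vertex cover and is minimal costs $O(k^2)$, since only $O(k^2)$ edges are present. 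Summing over the tree gives $O(k^2 2^k)$, and together with the preprocessing the total is $O(m + k^2 2^k)$.

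The main obstacle I expect is the bookkeeping needed to make the enumeration exact rather than merely an upper bound. The forced branch adds an entire neighborhood at once, so I must verify that the set reported at a leaf is genuinely minimal, discarding any vertex whose removal still leaves a cover; that the high-degree vertices forced during preprocessing are reinserted without destroying minimality; and that isolated vertices are never included. Guaranteeing that each minimal vertex cover is output exactly once, so that the injective correspondence of the counting argument is realized constructively, is the delicate part of the analysis, and it is precisely this that Damaschke's argument handles.
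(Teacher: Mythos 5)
This lemma is one the paper itself never proves: it is imported as a black box from Damaschke~\cite{D2006}, so there is no in-paper proof to compare against. Your argument is essentially the standard proof of this classical fact (and, in substance, Damaschke's own): the key point you get right is branching on a \emph{vertex} with the dichotomy ``$u$ in the cover'' versus ``$u$ not in the cover, hence $N(u)$ forced,'' rather than on the two ends of an edge. The vertex dichotomy is mutually exclusive, which is exactly what makes the map from minimal vertex covers to leaves well defined and injective and hence yields the bound $2^k$; edge-branching would not, since a minimal cover can contain both ends of an edge. The Buss-style preprocessing is also the right mechanism for the $O(m + k^2 2^k)$ running time.

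The two points you defer to ``Damaschke's argument'' do go through, and you could close them yourself in a few lines. First, reinserting the forced high-degree vertices (call this set $V_{>k}$) cannot destroy minimality: if $T$ is a minimal vertex cover of $G - V_{>k}$ with $|T \cup V_{>k}| \le k$, then every $v \in V_{>k}$ has more than $k$ neighbors while $|T \cup V_{>k}| - 1 \le k-1$, so $v$ keeps a neighbor outside $T \cup V_{>k}$; together with the minimality of $T$ this makes $T \cup V_{>k}$ minimal in $G$. Conversely, any minimal cover $S$ of $G$ with $|S| \le k$ contains $V_{>k}$, contains no vertex that is isolated in $G - V_{>k}$ (such a vertex would have all its neighbors inside $S$, contradicting minimality), and $S \setminus V_{>k}$ is minimal in $G - V_{>k}$; so the preprocessing is a bijection on exactly the objects being enumerated. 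Second, each minimal cover is output exactly once: along any root-to-leaf path, a branching vertex $u$ lies in the collected set if and only if the first branch was taken at $u$, so the collected set determines the path; and at the unique leaf consistent with $S$, the collected set is a vertex cover contained in $S$, hence equal to $S$ by minimality. Thus duplicates cannot arise, and leaves whose collected set fails the $O(k^2)$-time cover-and-minimality test are simply discarded.
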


For each minimal vertex cover of~$G$, we use branching to find positions of these vertices.
In one of the branchings, the minimal vertex cover is~$S$, and the position of a vertex~$v\in S$ is $\sigma(v)$.
It remains to fill in the gaps with vertices from~$V(G) \setminus S$.
Let~$i$ be such a position.
The purpose of including a vertex~$v\in V(G)\setminus S$ in position~$i$ is that it decreases the costs of edges between~$v$ and vertices in later positions.
Note that there are no edges among~$V(G)\setminus S$, the ends of all these edges are already in position.
For each vertex in~$V(G)\setminus S$, we can calculate how much the total cost decreases with perching it at position~$i$.
We choose the $k - |S|$ ones with the top scores, and we show that it is always safe to use one of them.
We need $k - |S|$ here because these vertices may be needed by other gaps.

We are now ready to summarize the algorithm in Figure~\ref{alg:main} and use it to prove Theorem~\ref{thm:alg-msvc}.

\begin{figure}[h!]
  \centering 
  \begin{tikzpicture}
    \path (0,0) node[text width=.75\textwidth, inner xsep=20pt, inner ysep=10pt] (a) {
      \begin{minipage}[t!]{\textwidth}
        \begin{tabbing}
          AAA\=Aaa\=aaa\=Aaa\=MMMMMAAAAAAAAAAAA\=A \kill
          1. \> guess a minimal vertex cover $S$ of~$G$ with $|S| \le k$;
          \\
          2. \> guess an injective mapping $\sigma: S \rightarrow \{1,2, \ldots,k\}$;
          \\
          3. \> {\bf for each} unoccupied position~$p < k$ of~$\sigma$ \textbf{do}
          \\
          3.1. \>\> {\bf for each} $u \in V(G)\setminus S$ {\bf do}
          \\
          3.1.1. \>\>\> $s(u) \leftarrow \sum \{j - p\mid \sigma^{-1}(j)\in N(u), j > p\}$;
          \\
          3.2. \>\> $C\leftarrow k - |S|$ vertices maximizing $s(\cdot)$, breaking ties arbitrarily;
          \\
          3.3. \>\> guess a vertex $x\in C$, add $x$ to~$S$, and $\sigma(x) \leftarrow p$;
        \\
        4.\> {\bf if} the cost of $\sigma$ is $\le w$ {\bf then return} ``yes'';
        \\
        \> {\bf else return} ``no.''
        \end{tabbing}
      \end{minipage}
    };
    \draw[draw=gray!60] (a.north west) -- (a.north east) (a.south west) -- (a.south east);
  \end{tikzpicture}
  \caption{The parameterized algorithm for minimum sum vertex cover.}
  \label{alg:main}
\end{figure}

\begin{proof}[Proof of Theorem~\ref{thm:alg-msvc}]
  We use the algorithm described in Figure~\ref{alg:main}.
  Since the algorithm returns ``yes'' only when a solution is identified, it suffices to show a solution must be returned for a yes-instance.
  We fix an optimal solution~$\sigma^*$, and we may number the vertices such that $\sigma^*(v_{i}) = i$.
  Let~$S^* = \{v_{1}, \ldots, v_{k}\}$.  Since $\sigma^*$ is feasible, $S^*$ is a vertex cover of~$G$.
  Step~1 finds a minimal vertex cover~$S$ of~$G$ that is a subset of~$S^*$, and step~2 finds $\sigma(v)$ of each vertex~$v\in S$.
  Note that $\sigma^* = \sigma$ when $|S| = k$, and we are already done.
  Otherwise, for each unoccupied position $p$, step~3 select a vertex $u$ from $V(G)\setminus S$ and assign $u$ to~$p$.

  Step~3.3 finds the vertex~$v_{p}$ if it is in~$C$.  Now suppose that $v_{p}\not \in C$.
  By the selection of vertices in~$C$, it contains a vertex $v_{i}$ with $i >k$ and $s(v_{i}) \geq s(v_{p})$.
  Let~$\sigma'$ be the solution obtained from~$\sigma^*$ by switching~$v_{i}$ and~$v_{p}$.
  We argue that $\sigma'$ has the same cost as~$\sigma^*$, hence also an optimal solution.
  Between $\sigma'$ and $\sigma^*$, only the costs of edges incident to~$v_{i}$ and~$v_{p}$ can be different.
  The cost of the edge~$v_{i} v_{j}$ or $v_{p} v_{j}$ remains $j$ if $j < p$.  There is no edge between $v_{i}$ and~$v_{j}$ or between $v_{p}$ and~$v_{j}$ for $j > k$ (because $S$ is a vertex cover).
  Thus, the difference between the total costs of $\sigma'$ and $\sigma^*$ is $s(v_{p}) - s(v_{j}) \le 0$.
Thus, the algorithm always ends with a solution with the same total cost as~$\sigma^*$ (note that $s(v_{p}) - s(v_{j}) \ge 0$ since $\sigma^*$ is an optimal solution).

We now analyze the running time.
We may preprocess the instance with the kernelization algorithm, and hence $G$ has at most $O(k^2)$ vertices.
We can use the algorithm of Damaschke \cite{D2006} to find all the required minimal vertex covers in $O(m + k^22^k)$ time.
For each of them, the number of injective mappings is $k!/(k- |S|)!$.
Steps~3.1 and~3.2 can be done in $O(k^4)$ time.  Step~3.3 makes $k - |S|$ branches.  Note that $|S|$ increases by one with each iteration, and thus the total number of branching is $(k - |S|)!$.
Step~4 takes $O(k^4)$ time.  The total time is thus $O(m + 2^kk! k^4)$.
\end{proof} 

\section{Appendix} 

\begin{proof}[Proof of Lemma~\ref{lem:mvc and msvc}]
  Let~$S^*$ be a minimum vertex cover of~$G$, and~$\sigma$ an optimal solution of $(G, w, k)$.
  We number the vertices such that $\sigma(v_{i}) = i$.  We may assume that $v_{1}\not\in S^*$: otherwise we consider the subgraph $G - v_{1}$ (note that $S^*\setminus \{v_{1}\}$ is a minimum vertex cover of $G - v_{1}$, and $\langle v_{2}, v_{3}, \ldots, v_{n}\rangle$ is an optimal solution of $(G - v_{1}, w - m, k - 1$)).
  As a result,
  \[
    r_{\sigma}(i) \le r_{\sigma}(1) = d(v_{1}) \le \tau
  \]
  by Lemma~\ref{lem:fix position}(i) and that $N(v_{1})\subseteq S^*$.

  Let~$k$ be the maximum cost of an edge in~$\sigma$, and  $c = \lfloor (m - k)/(\tau -1) \rfloor$.
  The minimum possible cost of~$\sigma$ is achieved when 
  \[
    r_{\sigma}(i) =
    \begin{cases}
      \tau & i \le c,
      \\
      m - k + 1 - c(\tau - 1) & i = c+1,
      \\
      1 & i > c+1.
    \end{cases}
  \]
  Thus, the cost of $\sigma$ is at least
  \begin{align*}
    &\tau (1 + \cdots + c) + ((c+1) + \cdots + k)
    \\
    =& {c(c+1) (\tau - 1)\over 2} + {k (k+1)\over 2}
    \\
    \ge& {m + k^2 \over 2} + {(m - k)^2 \over 2(\tau - 1)}.
  \end{align*}

  We take an ordering~$\pi$ of~$V(G)$ such that $\pi(x) \le \tau$ if and only if $x\in S^*$, and
\[
  r_{\pi}(1) \ge r_{\pi}(2) \ge \cdots \ge r_{\pi}(\tau).
\]
The maximum possible cost of $\pi$ is when $r_\pi(i)$ is either~$\lceil {m \over \tau}\rceil$ or~$\lfloor {m \over \tau}\rfloor$.  Thus, the cost of $\pi$ is at most $m (\tau+1)/2$.  The optimality of~$\sigma$ implies
\[
  {m + k^2 \over 2} + {(m - k)^2 \over 2(\tau - 1)} \le {m (\tau+1)\over 2}.
\]
Thus,
\[
  k\le \sqrt{m (\tau - 1) - {m^2\over \tau} + {m^2\over \tau^2}} + {m\over \tau} = O(\tau^{1.5}).
\]
This concludes the proof.
\end{proof}

It is worth stressing that the bound in the proof of Lemma~\ref{lem:mvc and msvc} cannot be attained.  In particular, if $r_\sigma(k') > 0$, then $r_\sigma(k' - \tau) > 1$ (i.e., there cannot be more than $\tau + 1$ vertices that each effectively covers one edge): we can always replace $\{v_{k' - \tau}, \ldots, v_{k'}\}$ with~$S^*$ to produce a strictly smaller solution.

\begin{conjecture}
  Any graph has an optimal solution in which the maximum cost of an edge is $\tau+ O(\log \tau)$.
\end{conjecture}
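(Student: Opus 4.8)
The plan is to first reduce to a convenient regime and then recast the conjecture as a counting question about independent-set vertices sitting in the active prefix. Exactly as in the proof of Lemma~\ref{lem:mvc and msvc}, I would repeatedly peel off the first vertex whenever it lies in the fixed minimum vertex cover $S^*$; each peeling decreases both the maximum cost and $\tau$ by one, so it preserves the quantity $k-\tau$ while only decreasing $\tau$. We may therefore assume $v_1\notin S^*$, whence every vertex has degree at most $d(v_1)\le|S^*|=\tau$. Charging every edge to an endpoint in $S^*$ then gives $m\le\tau^2$, and Lemma~\ref{lem:fix position}(i) yields a non-increasing sequence $r_\sigma(1)\ge\cdots\ge r_\sigma(k)\ge 1$ with $r_\sigma(1)\le\tau$ and $\sum_i r_\sigma(i)=m\le\tau^2$.

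The key reformulation is the inequality $k\le\tau+|B|$, where $B$ is the set of vertices of $U:=V(G)\setminus S^*$ that occupy the first $k$ positions. Indeed, the first $k$ vertices form a vertex cover, and among them at most $|S^*|=\tau$ belong to $S^*$, so the remaining ones all lie in the independent set $U$. Hence it suffices to show that, for a suitably chosen optimal solution and minimum vertex cover, at most $O(\log\tau)$ independent-set vertices are placed in the active prefix. I would fix $\sigma$ and $S^*$ so as to minimize $|B|$ (breaking ties by making $S^*$ appear as early as possible), so that no local exchange can delete a prefix vertex of $U$ without increasing the cost or uncovering an edge.

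The heart of the argument is then to prove $|B|=O(\log\tau)$. I would list the vertices of $B$ by increasing position as $u_1,\ldots,u_b$ and track a potential measuring the \emph{cheap coverage still available}, for instance $m-R_p$ where $R_p:=\sum_{i\le p}r_\sigma(i)$ is the number of edges covered by the first $p$ positions, evaluated just before each $u_j$. A vertex $u_j\in U$ is kept in the prefix only because the savings it creates, namely covering its edges to later $S^*$-vertices at its own small position, outweigh the unit shift it imposes on every edge covered after it. Quantifying this trade-off with the non-increasing property of $r_\sigma$ and with the suffix-replacement exchange behind the strengthening noted after Lemma~\ref{lem:mvc and msvc}, one hopes to show that each successive $u_j$ must roughly halve this potential; a factor-$2$ decay from an initial value at most $\tau$ would then force $b=O(\log\tau)$ and finish the proof.

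The main obstacle is precisely establishing this doubling decay. The elementary global counting in the proof of Lemma~\ref{lem:mvc and msvc} only yields $k=O(\tau^{1.5})$, and the obvious greedy-coverage estimate decays at rate $1-1/\tau$, which gives the far weaker bound $O(\tau\log\tau)$ on the prefix; extracting a genuinely logarithmic overhead requires showing that optimality forces the marginal savings of successive independent-set vertices to drop geometrically rather than linearly. A second, subtler difficulty is the choice of $S^*$: the active prefix need not contain any minimum vertex cover, so one must argue that $\sigma$ can be locally modified, without increasing its cost, until its prefix does contain one (or until $|B|$ is provably small). Reconciling these two points, the decay rate and the cover alignment, is exactly what separates the easy polynomial bound from the conjectured $\tau+O(\log\tau)$, and is why the statement remains a conjecture.
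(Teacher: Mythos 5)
This statement is a \emph{conjecture} in the paper: the authors offer no proof of it, and their Lemma~\ref{lem:mvc and msvc} establishes only the much weaker bound $O(\tau^{1.5})$, together with the remark after its proof that the extremal profile cannot be attained. Your proposal does not prove it either; it is a research plan whose two decisive steps are left open, as you yourself concede. The reformulation $k\le\tau+|B|$ is correct, and the peeling reduction to $v_1\notin S^*$ mirrors the paper's proof of Lemma~\ref{lem:mvc and msvc}. But everything hinges on showing $|B|=O(\log\tau)$ via a potential that halves with each prefix vertex of $U$, and no mechanism for this geometric decay is given. The exchange arguments actually available (Lemma~\ref{lem:fix position}(i) and the suffix-replacement observation that $r_\sigma(k')>0$ forces $r_\sigma(k'-\tau)>1$) produce only additive, per-position constraints, which, as you note, decay at rate $1-1/\tau$ and therefore yield at best bounds of order $\tau\log\tau$. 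Writing that ``one hopes to show'' each $u_j$ halves the potential is a restatement of the conjecture, not an argument for it; the cover-alignment difficulty (the active prefix need not contain any minimum vertex cover) is likewise flagged but not resolved. A proof attempt whose core lemma is explicitly hypothetical has a genuine gap, indeed the same gap that makes the statement open.

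Beyond this, two intermediate claims are unjustified. From $v_1\notin S^*$ you conclude that \emph{every} vertex has degree at most $d(v_1)\le\tau$; this does not follow, since $v_1$ is the first vertex of $\sigma$, not a maximum-degree vertex. What does follow (and what the paper uses) is only $r_\sigma(i)\le r_\sigma(1)=d(v_1)\le\tau$, because $N(v_1)\subseteq S^*$. Consequently your bound $m\le\tau^2$ is also unjustified: a priori one only gets $m=\sum_{i\le k}r_\sigma(i)\le k\tau$, and bounding $k$ is precisely what is to be proved, so invoking $m\le\tau^2$ at the outset would be circular. (Lemma~\ref{lem:fix position}(ii) gives a maximum-degree bound of the form $d(v_1)+k<\tau+k$, but again this involves $k$.) In short, the proposal correctly isolates the quantity to control and the obstacles to controlling it, but it establishes nothing beyond what the paper already proves, and the conjecture remains open.
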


The Held--Karp algorithm can be adapted to solve the problem in $O(2^n n^2)$ time.

\begin{theorem}
  \label{the: G is regular}
  There is an $O(4^k k^2)$-time algorithm for the minimum sum vertex cover problem on regular graphs.
\end{theorem}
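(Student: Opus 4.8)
The plan is to exploit the fact that regularity forces the graph to be small relative to the parameter, reducing the problem to an exact dynamic program on an instance of size $O(k)$. First I would dispose of the degenerate case: if $G$ has no edges then every ordering has total cost~$0$, so the instance is a yes-instance precisely when $w \ge 0$, decided in constant time. Henceforth I assume $G$ is $d$-regular with $d \ge 1$, so that $G$ has no isolated vertex.

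The key structural step is the bound $\tau \ge n/2$, which I would obtain by a one-line counting argument: any vertex cover~$C$ satisfies $\sum_{v \in C} d(v) \ge |E(G)|$, because every edge has an endpoint in~$C$; since $d(v) = d$ for all~$v$ and $|E(G)| = nd/2$, this gives $|C|\,d \ge nd/2$, that is, $|C| \ge n/2$, and hence $\tau \ge n/2$. Recall that in any yes-instance the first~$k$ vertices of a feasible solution form a vertex cover, so $\tau \le k$. Combining the two inequalities, a yes-instance must satisfy $n/2 \le \tau \le k$, i.e.\ $n \le 2k$. Thus whenever $n > 2k$ I may immediately return a no-instance, and otherwise proceed with $n \le 2k$.

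It then remains to solve an instance with $n \le 2k$ exactly, for which I would invoke the $O(2^n n^2)$-time adaptation of the Held--Karp algorithm noted above. The dynamic program indexes states by the set~$T$ of vertices occupying the first~$|T|$ positions and stores the minimum total cost over all orderings of~$T$, via the recurrence
\[
  f(T) = \min_{v \in T}\Bigl[\, f(T\setminus\{v\}) + |T|\cdot |N(v)\setminus T| \,\Bigr],
\]
since placing~$v$ last inside~$T$ (at position~$|T|$) assigns cost~$|T|$ to exactly the $|N(v)\setminus T|$ edges from~$v$ to not-yet-placed vertices. To respect the maximum-cost bound I confine attention to vertex covers of size at most~$k$: the answer is the minimum of~$f(T)$ over all vertex covers~$T$ with $|T| \le k$ (the remaining vertices are appended arbitrarily, contributing nothing), and the instance is a yes-instance iff this minimum is at most~$w$.

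Substituting $n \le 2k$ into the $O(2^n n^2)$ running time yields $O\bigl(2^{2k}(2k)^2\bigr) = O(4^k k^2)$, the claimed bound; the degenerate checks and the counting bound add only $O(n+m)$ time. I do not anticipate a genuine obstacle: the whole content is the observation that regularity collapses the instance to linear size in~$k$, after which the exact algorithm suffices. The only points needing care are handling isolated vertices (equivalently the $d = 0$ case) separately, since there $n$ need not be bounded by~$2k$, and ensuring the dynamic program enforces the maximum-cost constraint by restricting the optimum to vertex covers of size at most~$k$.
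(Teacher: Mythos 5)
Your proposal is correct and follows essentially the same route as the paper: regularity gives $\tau \ge n/2$, a yes-instance gives $k \ge \tau$, so $n \le 2k$, and the $O(2^n n^2)$ Held--Karp adaptation then runs in $O(4^k k^2)$ time. Your write-up is in fact slightly more complete than the paper's, since you prove the bound $\tau \ge n/2$ by an explicit counting argument valid for every degree $d \ge 1$ (the paper simply asserts it, treating degrees $0$ and $1$ as trivial) and you spell out the dynamic-programming recurrence and how it enforces the maximum-cost constraint.
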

\begin{proof}
  The problem is trivial when the degree is $0$ or $1$.
  Otherwise, $k \ge \tau \ge n/2$.  Thus, the problem can be solved in $O(4^k k^2)$ time.
\end{proof}

\bibliographystyle{plainurl}
\bibliography{main}
\end{document}